\def\algbackskip{\hskip-\ALG@thistlm}
\newtheorem{lemma}{Lemma}
\newcommand\subsubsubsection{\@startsection{paragraph}{4}{\z@}%
  {1.5ex \@plus 1ex \@minus .2ex}%
  {-1em}%
  {\normalfont\normalsize\bfseries}}
\begin{document}
%
\title{Error Detection Schemes for Barrett Reduction of CT-BU on FPGA in Post Quantum Cryptography }

\author{
  \IEEEauthorblockN{Paresh Baidya\IEEEauthorrefmark{1}\IEEEauthorrefmark{3}, Rourab Paul\IEEEauthorrefmark{2}, Vikas Srivastava\IEEEauthorrefmark{4}, Sumit Kumar Debnath\IEEEauthorrefmark{1}}\\
     \IEEEauthorblockA{\IEEEauthorrefmark{1}Department of Mathematics, National Institute of Technology, Jamshedpur, India}\\
     \IEEEauthorblockA{\IEEEauthorrefmark{2}Department of Computer Science and Engineering, Shiv Nadar University, Chennai, Tamil Nadu, India}\\
   \IEEEauthorblockA{\IEEEauthorrefmark{3}Department of Computer Science and Engineering, Siksha ‘O’ Anusandhan Deemed to be University, Bhubaneswar, India\\
        \IEEEauthorblockA{\IEEEauthorrefmark{4}Department of Mathematics, Indian Institute of Technology Madras, Chennai, India}\\      
          \IEEEauthorblockA{\IEEEauthorrefmark{3}pareshbaidya@soa.ac.in}
           \IEEEauthorblockA{\IEEEauthorrefmark{2}rourabpaul@gmail.com}\\
   }
   
}

\maketitle

\begin{abstract}
A fault can occur naturally or intentionally. However, intentionally injecting faults into hardware accelerators of Post-Quantum Cryptographic (PQC) algorithms may leak sensitive information. This intentional fault injection in side-channel attacks compromises the reliability of PQC implementations. The recently NIST-standardized key encapsulation mechanism (KEM), Kyber may also leak information at the hardware implementation level.
This work proposes three efficient and lightweight recomputation-based fault detection methods for Barrett Reduction in the Cooley-Tukey Butterfly Unit (CT-BU) of Kyber on a Field Programmable Gate Array (FPGA). The CT-BU and Barrett Reduction are fundamental components in structured lattice-based PQC algorithms, including Kyber, NTRU, Falcon, CRYSTALS-Dilithium, etc.
This paper introduces a new algorithm, Recomputation with Swapped Operand (RESWO), for fault detection. While Recomputation with Negated Operand (RENO) and Recomputation with Shifted Operand (RESO) are existing methods used in other PQC hardware algorithms. To the best of our knowledge, RENO and RESO have never been used in Barrett Reduction before.
The proposed RESWO method consumes a similar number of slices compared to RENO and RESO. However, RESWO shows lesser delay compared to both RENO and RESO.
The fault detection efficiency of RESWO, RENO, and RESO is nearly $\sim$100\%.
\end{abstract}
\begin{IEEEkeywords}
Fault Tolerant, Recomputation, Polynomial Multiplication, FPGA, Cooly-Tukey Butterfly, NTT, Barrett Reduction Reduction.
\end{IEEEkeywords}

\section{Introduction}
The rapid advancements in quantum computing present a significant threat to traditional public-key cryptographic systems (e.g., RSA \cite{mollin2002rsa} and elliptic curve cryptography (ECC)\cite{hankerson2004guide}). The security of most of the classical cryptographic schemes depends on the computational hardness of mathematical problems like integer factorization and discrete logarithms. However, these hard problems can be solved efficiently using Shor’s algorithm \cite{shor2002introduction} on a sufficiently powerful quantum computer. As a result, there is an urgent need to transition towards post-quantum cryptographic (PQC) \cite{overbeck2009code,srivastava2023overview,micciancio2009lattice} schemes that remain secure even in the presence of quantum adversaries.

The National Institute of Standards and Technology (NIST) initiated a post-quantum cryptography standardization process\cite{alagic2019status} to address this challenge in 2017. The aim is to identify cryptographic algorithms that can replace classical public-key cryptography. Among the various proposals, lattice-based cryptography emerged as a strong candidate due to its worst-case hardness guarantees, efficiency, and versatility. In 2024, NIST selected CRYSTALS-Kyber~\cite{bos2018crystals}, a lattice-based KEM, as the standard post-quantum KEM. In addition, two more lattice-based digital signature algorithms \cite{ducas2018crystals,prest2020falcon} were chosen for the standardization.

Although lattice-based cryptographic algorithms provide strong theoretical security guarantees, their practical implementations introduce several challenges, particularly in hardware-based deployments such as ASIC (Application-Specific Integrated Circuit), FPGA and Embedded Processors. Hardware implementations of PQC schemes are essential for high performance and efficiency. However, these implementations are vulnerable to various physical attacks, including side-channel and fault injection attacks. Side-channel attacks exploit unintended physical emissions such as power consumption, electromagnetic radiation, and timing information to extract cryptographic secrets. Fault injection attacks involve intentionally manipulating hardware operations to induce computational errors that potentially leak information about the secret keys.

Arithmetic operations like modular reduction are fundamental to structured lattice-based cryptographic schemes such as Kyber\cite{bos2018crystals}, NTRU\cite{hoffstein1998ntru}, Falcon\cite{prest2020falcon}, and CRYSTALS-Dilithium\cite{ducas2018crystals}. Modular reduction, particularly Barrett Reduction, is widely used in Number Theoretic Transform (NTT) computations. Both Dilithium and Kyber operate over the cyclotomic ring $\mathbb{Z}_q [x]/(x^n +1)$ utilizing the Number Theoretic 
Transformation (NTT) to accelerate the polynomial multiplication. The efficient implementation of Barrett Reduction is essential for maintaining high-speed and low-power cryptographic operations. However, the vulnerability of Barrett Reduction to fault injection attacks poses a significant security risk because even small perturbations in the computation can result in the leakage of sensitive information.

There is an increase in the deployment of PQC algorithms in hardware accelerators, particularly FPGAs. FPGAs offer flexibility and high performance for cryptographic implementations. However, their reconfigurable nature also makes them susceptible to various attacks, including transient faults induced by environmental factors and fault injection attacks using techniques such as voltage glitching and clock manipulation. Given the critical role of Barrett Reduction in lattice-based PQC schemes such as Kyber, there is a strong need to design an error detection scheme for Barrett reduction on FPGA to secure it against fault attacks.
\subsection{Literature}
Several research studies on FPGA, ASIC and Embedded Processor platforms have proposed recomputation and parallel computation techniques to address fault detection in various components of both classical and post-quantum cryptosystems.
In paper \cite{canto2}, Canto et al. implement fault detection hardware accelerators for lattice-based Key Encapsulation Mechanisms (KEMs) on a Kintex Ultrascale+ FPGA. They proposed three schemes: Re-computing with Shifted Operands (RESO), Re-computing with Negated Operands (RENO) and Re-computing with Scaled operands (RECO) for the Multiply-Accumulate (MAC) operation. This MAC computes $ACC = A \times B + C$ for matrix-matrix, matrix-vector, vector-vector, and polynomial multiplications. These RESO, RENO and RECO methods compute $ACC_{reso}$, $ACC_{reno}$ and $ACC_{reco}$ using equ. \ref{equ:reso}, \ref{equ:reno} and \ref{equ:reco} respectively.
\begin{equation}
\label{equ:reso}
 ACC_{reso} = \text{shift}_{2r}(\text{shift}_l(A) \times \text{shift}_l(B) + \text{shift}_{2l}(C))  
\end{equation}

\begin{equation}
\label{equ:reno}
 ACC_{reno} =-(-A \times B - C)
\end{equation}

\begin{equation}
\label{equ:reco}
 ACC_{reco} =\frac{t \times A \times t \times B + t^2 \times C}{t^2}
\end{equation}
Canto et al. \cite{canto2} compare $ACC$ with $ACC_{reso}$, $ACC_{reno}$, and $ACC_{reco}$ in the RESO, RENO, and RECO methods, respectively. If $ACC$ does not match $ACC_{reso}$ or $ACC_{reno}$, $ACC_{reco}$, it indicates a fault flag in the multiplier of the KEMs. These fault detection techniques are adopted in FrodoKEM, Saber, and NTRU which provide high error coverage with minimal performance overhead. It is important to note that Canto et al. \cite{canto2} only report the implementation overhead of RESO in FrodoKEM, Saber, and NTRU. They do not provide any overhead details for RECO and RENO. For instance, in the Saber MAC implementation, RESO consumes 19 Configurable Logic Blocks (CLBs) and 4.9 mW of power, whereas RENO and RECO consume 266 CLBs with 19 mW of power and 65 CLBs with 5.94 mW of power, respectively. Based on the reported RESO overhead, it can be reasonably assumed that the overhead values for RENO and RECO will be significantly higher.  \\
In \cite{kermani}, Kermani et al. propose an oblivious error detection scheme for Galois Counter Mode (GCM) on a 65nm ASIC platform. It is used to verify the integrity of data. The proposed approach improves the compatibility with various block ciphers and finite field multipliers. They use Re-computation of Swapped Cipher text and Additional authenticated Blocks (RESCAB) in the schemes. In this GCM, the Galois Hash (GHASh) is the main computation block which is computed on a GF[$2^{128}$]. On the other hand, the parallel RESCAB processes the swapped input with another $GF(2^{128})$. The outputs from GHASH and RESCAB are then compared to detect faults. This architecture improves design flexibility, as demonstrated through hardware implementations and error simulations. \\
A lightweight fault detection architecture for modular exponentiation $C=X^Y \mod n$, a crucial operation in both classical and post-quantum cryptography, is proposed in \cite{saeed} for FPGA implementations. The proposed Recomputation with Modular Offset (REMO) computes $C'=(X+Offset)^Y~mod~n$. The outputs from the modular exponentiation unit, $C$, and the REMO unit, $C'$, are then compared to detect faults. This method achieves nearly 100\% error detection with minimal computational and area overhead. \\
In \cite{dominguez}, a recomputation-based Point Validation (PV) method is employed for fault detection in elliptic curve scalar multiplication (ECSM). This method is implemented in Xilinx Virtex 2000E FPGA. It provides a high error coverage with minimal computational overhead. \\
In article \cite{sarker}, Sarker et al. implement a RENO model for Number Theoretic Transformation (NTT) in Zynq and Spartan7 FPGAs. This work implements three variants: $v1$, $v2$ and $v3$ based on the placement of RENO error checking in the logic path. Placing RENO deeper in the logic path increases slice utilization but increases error detection efficiency.
Ahmadi et al. \cite{ahmadECSM} present an algorithm for fault detection scheme for the window method in elliptic curve scalar multiplication (ECSM). In this paper, the authors propose an algorithm-level fault detection method  in window method scalar multiplication. Using simulation-based fault injection, they demonstrate that the scheme effectively detects a wide range of faults with high accuracy. This proposed method is implemented on both ARMv8 and FPGA architectures. Ahmadi et al. \cite{ahmadiNAF} also address a research gap in fault detection for $\tau$NAF conversion and Koblitz curve cryptosystems. Specifically, the authors introduce an algorithm-level fault detection scheme for the single $\tau$NAF conversion algorithm and two fault detection schemes for the double $\tau$NAF conversion algorithm. In this paper, the fault detection method is implemented on ARMv7 and ARMv8 architectures to evaluate its feasibility.\\
In the context of PQC schemes, Cintas et al. \cite{cintas} present an error detection schemes for Goppa arithmetic units used in the McEliece cryptosystem by utilizing the algebraic structure of underlying composite fields. They implement a Parity Checker for different sub components of McEliece. The schemes proposed in article \cite{cintas} are not only suitable for arithmetic units but are also applicable to core functions of other public-key cryptosystems that utilize composite fields as their mathematical base. The authors also provide the Goppa polynomial evaluation (GPE) implementations on an FPGA, and performance overheads are analyzed for different configurations. In the following, Canto et al. \cite{canto} introduce fault detection schemes for various finite-field operations, including addition, subtraction, multiplication, squaring, and inversion, within the context of the code-based McEliece cryptosystem. The authors implement the 5-bit Cyclic Redundancy Check (CRC-5) for different subcomponents of the McEliece cryptographic algorithm. The schemes proposed in \cite{canto} utilize different error detection techniques such as regular parity, interleaved parity, CRC-2, and CRC-8.  The proposed methods are integrated into distinct components of the Key Generator to enhance error detection probability, particularly in operations involving multiplications and inversions over \(GF(2^{13})\). Kamal et al. \cite{kamal} study various techniques to improve the fault resistance of NTRUEncrypt hardware implementations by proposing spatial duplication techniques. The proposed methods are evaluated based on their error detection capabilities, as well as their impact on area and throughput overheads.
\subsection{Our Claim} 
The aforementioned fault detection literature can be categorized into two types.
\begin{itemize}[noitemsep, left=25pt]
    \item[Type 1:] RESCAB \cite{kermani}, PV \cite{dominguez} fault detection methods which are very dependent on target crypt algorithms.
    \item[Type 2:] The fault detection methods Parity Checkers \cite{cintas}, CRC \cite{canto}, RENO \cite{sarker}, RESO \cite{canto2}, REMO \cite{saeed}, RECO \cite{canto2} are more general and easier to adopt as fault detection mechanisms in various cryptographic algorithms.
\end{itemize}

In our paper, we utilize RESO and RENO for the Barrett Reduction of the CT-BU, which falls under Type 2. We also benchmark a novel algorithm named RESWO for Barrett Reduction on FPGA, which falls under the Type 2 fault detection category.
To the best of our knowledge, this work is the first to propose recomputation-based error detection schemes for Barrett Reduction, which is one of the most resource and delay-intensive fundamental design blocks in many PQC algorithms including Round 3 finalists: Kyber, CRYSTALS-Dilithium, Falcon, and NTRU \cite{vadim}.
The primary contributions of the paper can be summarized as:
\begin{itemize}
			\item This paper proposes RESWO, a novel recomputation-based fault detection algorithm for Barrett Reduction in the CT-BU of NTT operations. The efficient FPGA implementation of RESWO maintains a similar slice overhead with reduced delay compared to existing solutions, while achieving a very high fault detection efficiency of 99.97\%. This makes RESWO suitable for high speed resource constrain hardware platforms. Thus, RESWO can be used in any polynomial multiplication with a modular reduction process.
			
			\item  To the best of our knowledge, this is the first work to implement and evaluate Recomputation with Negated Operand (RENO) and Recomputation with Shifted Operand (RESO) for Barrett Reduction in CT-BU within the NTT operation. A detailed comparative analysis between RESWO, RENO, and RESO reveals that while all three achieve similar fault detection efficiency, RESWO outperforms in delay.

			\item  The proposed RESWO is integrated into multiple NIST Round 3 PQC finalists, such as Kyber, CRYSTALS-Dilithium, Falcon, and NTRU, to validate its practical applicability. Our FPGA implementation results demonstrate that RESWO improves the security of these PQC schemes against fault attacks. RESWO keeps the area, power, and delay overheads minimal. In addition, error detection efficiency evaluations for random and burst fault injection (1-23 bit faults) confirm that RESWO consistently achieves $\sim$99.97\% detection accuracy.
\end{itemize}
The organization of the article is as follows: 
The proposed fault detection scheme is detailed in Section \ref{sec:method}, while the results are discussed in Section \ref{sec:res}. Finally, the conclusions are provided in Section \ref{sec:con}.
\vspace{-10pt}

\section{Fault Detection Methods}
\label{sec:method}
This paper employs three recomputation methods: RESWO, RENO and RESO to detect faults in Barrett Reduction used in CT-BU. The Barrett Reduction is the most resource-intensive, latency-critical, and energy-demanding operation in the NTT transformation. These three recomputation methods take encoded operands from the main Barrett Reduction unit and recompute the operations inside Barrett Reduction with a delayed clock input. The correlation between the intermediate values of different registers used in Barrett Reduction and those of the recomputation units helps detect both transient and permanent faults in the CT-BU.
In our recomputation methods and Barrett Reduction, instead of computing on all $l$ bits of the polynomial coefficient $\alpha(x)$ at once, these three fault detection methods operate on smaller, fixed word sizes of $w$ bits from the total $l$ bits of $\alpha(x)$, where $w \leq l$. This wordwise modification of Barrett Reduction is required for 2 reasons. 
(i) Instead of looking into the final values, comparing intermediate data in each loop may increase the fault detection efficiency. However, the study of $w$ vs. error detection efficiency in Table \ref{tab:fault_reswo} shows that $w$ has no effect on error detection efficiency. (ii) The adjustable $w$ allows tuning of power consumption, throughput and resource usage of the design which offers architectural flexibility.
\vspace{-10pt}

\subsection{Modified Barrett Reduction}
The proposed Modified Barrett Reduction for Fault Detection (MBRFD) method relies on the recomputation technique where the input $\alpha$ and $\beta$ are encoded to detect transient and permanent faults during modular reduction. As shown in Algorithm \ref{algo:mbrfd}, lines 4 and 5, the operands $\alpha$ and $\beta$ split into smaller word-sized components $(\alpha w_i, \beta w_j)$ to facilitate word-wise processing. Then, at line 6 of Algorithm \ref{algo:mbrfd}, it computes the intermediate product of word-sized operands $(\alpha w_i, \beta w_j)$ in each iteration. Line 7 appends $(i+j)\times w$ zeros to the $c$ to ensure the required bit-length constraints. The modular reduction then multiplies the quotient term with q and subtracts from c, where $\mu=\lfloor \frac{2^{2\times l}}{q} \rfloor$ is precomputed value. In line 9, Recomputation Unit ($ReComp$) computes an alternative remainder $r^f$. In the modular reduction process, the conditional statements in lines 11–15 ensure that the final result remains within the valid range. If the computed remainder $r$ is different from $r^f$, as shown in line 16, the fault flag $f_i$ is set to ‘1’. It indicates an error. Otherwise, $f_i$ remains ‘0’ i.e. it confirms fault-free execution.
  
 \begin{algorithm}[!htb]
    \caption{Modified Barrett Reduction for Fault Detection in Hardware : MBRFD($\alpha$, $\beta$, $q$)}
    \label{algo:mbrfd}
   \textbf{Input} $\alpha =(\alpha_{l-1},...\alpha_{1}, \alpha_{0})$,
      $\beta =(\beta_{l-1},...\beta{_1}, \beta_{0})$ \\
    $q =(q_{l-1},...q_{1}, q_{0})$ where $\mu=\lfloor \frac{2^{2\times l}}{q} \rfloor$ \& $k=2\times l$ \\
  \textbf{Output} $\rho, f$
    \begin{algorithmic}[1]
    \State $\rho=0$	
      \For{i=0 to $(\frac{l}{w}-1)~~$}
       \For{j=0 to $(\frac{l}{w}-1)~~$}
      \State $\alpha w_i=\alpha_{[iw+w-1...iw]}$
      \State $\beta w_j=\beta_{[jw+w-1...iw]}$
      \State $c=\alpha w_i \times \beta w_j$
      \State $c=c$ || $(i+j) \times w \{0\}$
      \State $r=c - (c \times \mu)_{[2k-1... k]}\times q $
       \State $r^f=ReComp(\alpha w_i, \beta w_j)$ 
      \If {($r>n$)} 
        \State $\rho=\rho+r-n$
     \Else
       \State $\rho=\rho+r$
     \EndIf
     
     \If {$\rho>n$} 
       \State $\rho=\rho-n$
     \EndIf
     
     
     \If {$r!=r^f$} 
     \State $f_i='1'$
     \Else
     \State $f_i='0'$
     \EndIf
     
      \EndFor    
      \EndFor   
    \State  \textbf{return}   $\rho$, $f$
    \end{algorithmic}  
    \end{algorithm}
 \subsection{Recomputation Units}
This paper implements three $ReComp$ units: $RESWO$, $RENO$, and $RESO$, which run in parallel with the baseline PQC algorithms.
\subsubsection{RESWO}
\label{sec:reswo}
This paper proposes a new Recomputation algorithm named $RESWO$ to detect faults inside Barrett Reduction, as shown in Algorithm \ref{algo:reswo}. The proposed $RESWO$ algorithm takes swapped input words and adjusts the final output by multiplying with a value $\Delta$, which depends on the positions of the swapped inputs.
If we divide $l$ bits inputs: $\alpha$ and $\beta$ in $w$ bits word-size (segments), each word of $\alpha$ and $\beta$ can be expressed as: $\alpha w_i=\alpha_{[{iw+w-1}...i' . . .j'..{iw}]}$ and $\beta w_j=\beta_{[{iw+w-1}.. ...{iw}]}$. The swapped word of $\alpha$ is represented as ${\alpha w_i}^{swapped} = \alpha_{[{iw+w-1}..j' ..i'..{iw}]}$, where $q$ is the modulus. The ${\alpha w_i}^{swapped}$ denotes the value of $\alpha w_i$ with the $i'^{th}$ and $j'^{th}$ bits swapped. Here, $i' \leq w$ and $j'\leq w$.

 In Lemma \ref{lemma:1}, we prove that the multiplication of two word-size operands, $\alpha w_i$ and $\beta w_j$, is equal to $({\alpha w_i}^{swapped} \times \beta w_j)+\Delta \times \beta w_j$. If this holds true, then the zero-padded $c^f$ (line 5 of Algorithm \ref{algo:reswo}) is equal to $c$ (line 7 of Algorithm \ref{algo:mbrfd}), and consequently, $r^f$ (line 6 of Algorithm \ref{algo:reswo}) is equal to $r$ (line 8 of Algorithm \ref{algo:mbrfd}). In this paper, zero-padded $c^f$ is used in algorithms \ref{algo:reswo}, \ref{algo:reno}, \ref{algo:reso} on lines 5, 2, 2 respectively for the same logic.
\begin{algorithm}[!htb]
\caption{Recomputation with Swapped Operands (RESWO)}
\label{algo:reswo}
\textbf{Input} Two positive integers $\alpha$ and $\beta$ \\
\textbf{Output} \textbf{$r^f$}= $\alpha \times \beta~mod~q$
\begin{algorithmic}[1]
\State $\alpha_{swapped}$ = Swap bits at arbitrary position $i'$ and $j'$ in $\alpha$
\State Compute the difference $(\delta)$ between the two swapped bits in $\alpha$ at positions $i'$ and $j'$, i.e $\delta = \alpha[i']-\alpha[j'] $
\State Compute the weighted difference: $\Delta = \delta \times (2^{i'}-2^{j'})$
\State Compute the product: $c^f$= $(\alpha_{swapped} \times \beta)+ \Delta \times \beta $   
\State *Pad Zeros $c^f=c^f$ || $(i+j) \times w \{0\}$
\State Compute Remainder $r^f=c^f - (c^f \times \mu)_{[2k-1... k]}\times q $
\State \textbf{return} $r^f$
\end{algorithmic}
\end{algorithm}

\begin{lemma}
\label{lemma:1}
 If $\alpha w_i=\alpha_{[{iw+w-1}...i' . . .j'..{iw}]}$ and $\beta w_j=\beta_{[{jw+w-1}.. ...{jw}]}$ are the encoded  word segments of $\alpha$ and $\beta$ respectively, then $({\alpha w_i}^{swapped} \times \beta w_j)+\Delta \times \beta w_j$=$\alpha w_i \times \beta w_j$. 
\end{lemma}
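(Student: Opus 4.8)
The plan is to reduce the entire statement to a single scalar identity about the swapped word, and then dispatch that identity by a direct positional-notation computation. The key observation is that the factor $\beta w_j$ appears on both sides of the claimed equality, so I would first factor it out. Concretely, the target
\[
\bigl({\alpha w_i}^{swapped} \times \beta w_j\bigr)+\Delta \times \beta w_j=\alpha w_i \times \beta w_j
\]
holds for arbitrary $\beta w_j$ if and only if
\[
{\alpha w_i}^{swapped}+\Delta=\alpha w_i .
\]
So the whole problem collapses to showing that adding the weighted difference $\Delta$ to the swapped word recovers the original word; the multiplication by $\beta w_j$ is then just distributivity.

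To prove ${\alpha w_i}^{swapped}+\Delta=\alpha w_i$, I would write the word $\alpha w_i$ in its binary positional expansion as $\sum_{k} \alpha[k]\,2^{k}$, where the index $k$ ranges over the $w$ bit positions of the segment. The crucial structural fact is that swapping the bits at positions $i'$ and $j'$ leaves every other term of this sum unchanged: only the two summands $\alpha[i']\,2^{i'}$ and $\alpha[j']\,2^{j'}$ are affected, becoming $\alpha[j']\,2^{i'}$ and $\alpha[i']\,2^{j'}$ in ${\alpha w_i}^{swapped}$. Subtracting, all common terms cancel and I am left with
\[
\alpha w_i-{\alpha w_i}^{swapped}
=\bigl(\alpha[i']-\alpha[j']\bigr)\,2^{i'}-\bigl(\alpha[i']-\alpha[j']\bigr)\,2^{j'}
=\bigl(\alpha[i']-\alpha[j']\bigr)\bigl(2^{i'}-2^{j'}\bigr).
\]
Recognizing $\delta=\alpha[i']-\alpha[j']$ and $\Delta=\delta\,(2^{i'}-2^{j'})$ from the definitions in Algorithm \ref{algo:reswo}, this is precisely $\alpha w_i-{\alpha w_i}^{swapped}=\Delta$, i.e.\ the scalar identity I wanted.

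Honestly, I do not expect a genuine obstacle here: once the problem is reframed as the scalar identity, it is an elementary cancellation argument. The only point requiring mild care is the sanity check on the degenerate case $\alpha[i']=\alpha[j']$, where $\delta=0$, hence $\Delta=0$, and the swap leaves the word untouched — everything remains consistent. A second detail worth flagging explicitly is that the positions $i'$ and $j'$ are taken within the word (with $i'\le w$, $j'\le w$), so the expansion is over the segment's own bit positions and no cross-word carry interferes. After establishing the scalar equality, I would conclude by multiplying through by $\beta w_j$ and invoking the remark already made in the text that this equality of word products forces the zero-padded $c^f$ to equal $c$, and therefore $r^f=r$, closing the loop with Algorithm \ref{algo:mbrfd}.
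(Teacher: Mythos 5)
Your proposal is correct and follows essentially the same route as the paper: both reduce the claim to the scalar identity ${\alpha w_i}^{swapped} = \alpha w_i - \Delta$ via the binary positional expansion (you by direct cancellation of the unchanged terms, the paper by a three-way case analysis on the bit values, which amounts to the same computation) and then conclude by distributivity over $\beta w_j$. No gaps; your explicit handling of the degenerate case $\alpha[i']=\alpha[j']$ and of the within-word indexing of $i', j'$ only makes the argument tidier.
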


\begin{proof}
Let $\alpha w_i$ be represented in binary as $\alpha w_i=\sum_{k=iw}^{iw+w-1} \alpha w_{k}.2^{k} $, Where, $\alpha w_{k}$ represents the bit at position k (either 0 or 1), and $2^k$ is the corresponding weight. Now, swap two arbitrary bits i' and j' in the word segment $\alpha w_i$ and produce new value ${\alpha w_i}^{swapped}=\alpha_{[{iw+w-1}..j' ..i'.\alpha'_{iw}]}$. Then, we have three results:
\begin{itemize}
    \item if $\alpha w_i[i']=1$ and $\alpha w_i[j']=0$, the swap decreases by $\alpha w_i$ by $2^{i'}-2^{j'}$.
    \item if $\alpha w_i[i']=0$ and $\alpha w_i[j']=1$, the swap increases by $\alpha w_i$ by $2^{i'}-2^{j'}$.
    \item if $\alpha w_i[i']=\alpha w_i[j']$, then swap has no changes.
\end{itemize}
Therefor, 
\begin{align*}
\delta w_i&=(\alpha w_i[i']-\alpha w_i[j'])
\end{align*}
 where, $\delta w_i$ is the  is the difference between the original bit values.
Now,
\begin{align*}
\Delta  w_i&=(\alpha w_i[i']-\alpha w_i[j'])\times(2^{i'}-2^{j'}).
\end{align*}
Then the new value of $\alpha w_i$ after the swap is: 
\begin{align*}
{\alpha w_i}^{swapped}&=\alpha w_i - \Delta w_i
\end{align*}
Therefore,
\begin{align*}
&{\alpha w_i}^{swapped}\times \beta w_j=(\alpha w_i - \Delta w_i) \times\beta w_j
\end{align*}
Now,
\small{
 ${(\alpha w_i}^{swapped}\times \beta w_j)+ \Delta w_i \times \beta w_j\\
=(\alpha w_i - \Delta w_i)\times \beta w_j + \Delta w_i \times \beta w_j$\\
$=\alpha w_i \times\beta w_j$
}
\end{proof}
\begin{figure}[!htb]
\centering
\includegraphics[width=0.5\textwidth]{./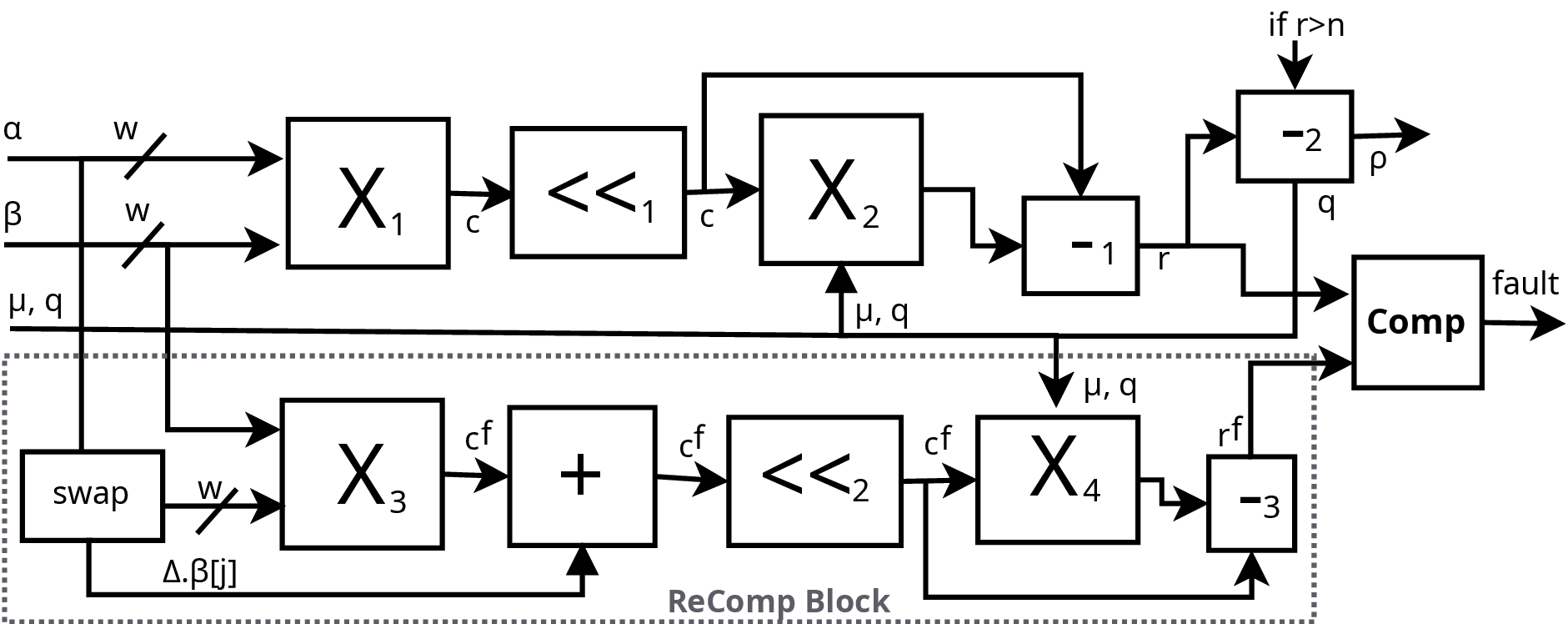}
\vspace{-5pt}
\caption{Hardware Architecture of Barrett Reduction with RESWO as ReComp Unit}
\vspace{-5pt}
\label{fig:reswo}
\end{figure}
As shown in Fig. \ref{fig:reswo}, the hardware of $RESWO$ module consists of two multipliers ($X_3$ and $X_4$), an adder ($+$), a subtractor ($-_3$), a left shifter ($<<_2$) and a swap block. 
The multiplier $X_3$ computes the product of ${\alpha w_i}^{swapped}$ and $\beta w_j$ where ${\alpha w_i}^{swapped}$ is generated by $swap$ block. Additionally, the $swap$ block computes the product of $\Delta$ and $\beta w_j$. The adder block $+$ add ${\alpha w_i}^{swapped}.\beta w_j$ with $\Delta.\beta w_j$ in $c_f$ by following line 4 of Algorithm \ref{algo:reswo}. The left shifter block $<<_2$ left shifts $c^f$ by $(i+j)w$ positions and pads $(i+j)w$ zero bits at the least significant bit position (line 5 of Algorithm \ref{algo:reswo}). 
The multiplier $X_4$ operates in two steps. In the first step, it computes the product of $c^f$ and $\mu$, extracting only the bits from positions $[2k-1... k]$, represented as $(c^f \times \mu){[2k-1... k]}$. In the second step, $X_4$ multiplies $(c^f \times \mu){[2k-1... k]}$ by $q$. As shown in line 6 of Algorithm \ref{algo:reswo}, the subtractor block $-_3$ subtract ($c^f \times \mu)_{[2k-1... k]}\times q$ from $c^f$ and finally produces $r^f$ to the $comp$ block for fault detection.
$c^f$ and the subtractor block $c^f$ from shifted $c^f$. The swap block exchanges the $i'^{th}$ and $j'^{th}$ position, producing ${\alpha w_i}^{swapped}$ (line 1 of \ref{algo:reswo}). Additionally, it computes $\Delta$ as shown in line 3 of \ref{algo:reswo}.

\subsubsection{RENO}
As shown in Fig. \ref{fig:reno}, the hardware of the $RENO$ module consists of two multipliers ($X_3$ and $X_4$), two 2’s complement blocks ($2'scompl_1$ and $2'scompl_2$), a left shifter ($<<_1$), and a subtractor ($-_3$).
This $ReComp$ $RENO$ variant takes $-\alpha$ instead of $\alpha$ from the $2's~compl_1$ block, as shown in Fig. \ref{fig:reno}. Consequently, line 1 of Algorithm \ref{algo:reno} computes $c^f=-\alpha \times \beta $ using the $X_3$ multiplier. Thereafter, the left shifter block $<<_2$ left shifts $c^f$ by $(i+j)w$ positions and pads $(i+j)w$ zero bits at the least significant bit position. Therefore line 2 of Algorithm \ref{algo:reno} concludes $c^f=-\alpha w_i \times \beta w_j$ || $(i+j) \times w \{0\}$. Subsequently, if we replace this $c^f$ in line 3,of Algorithm \ref{algo:reno}, it becomes:
\begin{align}
    r^f=&(\underbrace{-\alpha \times \beta || (i+j) \times w \{0\}}_{-c^f}) \\
    &- ((\underbrace{-\alpha \times \beta  || (i+j) \times w \{0\}}_{-c^f}) \times \mu)_{[2k-1... k]}\times q \nonumber
\end{align}
The above line can be simplified as : 
\begin{align}
 \label{eq1}
    r^f=-c^f -(- c^f \times \mu)_{[2k-1... k]}\times q )
\end{align}
\begin{algorithm}[!htb]
\caption{Recomputation with Negate Operands (RENO)}
\label{algo:reno}
\textbf{Input} Two positive integers $\alpha$ and $\beta$ \\
\textbf{Output} \textbf{$r^f$}= $\alpha \times \beta~mod~q$
\begin{algorithmic}[1]
\State Compute the product: $c^f$= $-\alpha \times \beta  $   
\State *Pad Zeros $c^f=c^f$ || $(i+j) \times w \{0\}$ 
\State Compute Remainder $r^f=-c^f - (-c^f \times \mu_{[2k-1... k]})\times q $
\State Compute 2's complement $r^f=-2's compl(r^f) $
\State \textbf{return} $r^f$
\end{algorithmic}
\end{algorithm}

The multiplier $X_4$ follows the same two-step multiplication process on $c^f$, $\mu$ and $q$ as described in Sec. \ref{sec:reswo}.
Then, the subtractor block ($-_3$) subtract $(-c^f \times \mu)_{[2k-1... k]}\times q$ form $c^f$. Finally, the $2's~compl_2$ block takes $-c^f - (-c^f \times \mu)_{[2k-1... k]} \times q$ as input to compute $c^f - (c^f \times \mu)_{[2k-1... k]} \times q$. The $comp$ module takes one input from $-_1$ and another from subtractor $-_3$. If the outputs of subtractor $-_1$ and subtractor $-_3$ differ, the $fault$ signal is set to $1$; otherwise, it remains $0$.
\begin{figure}[!htb]
\centering
\vspace{-5pt}
\includegraphics[width=0.5\textwidth]{./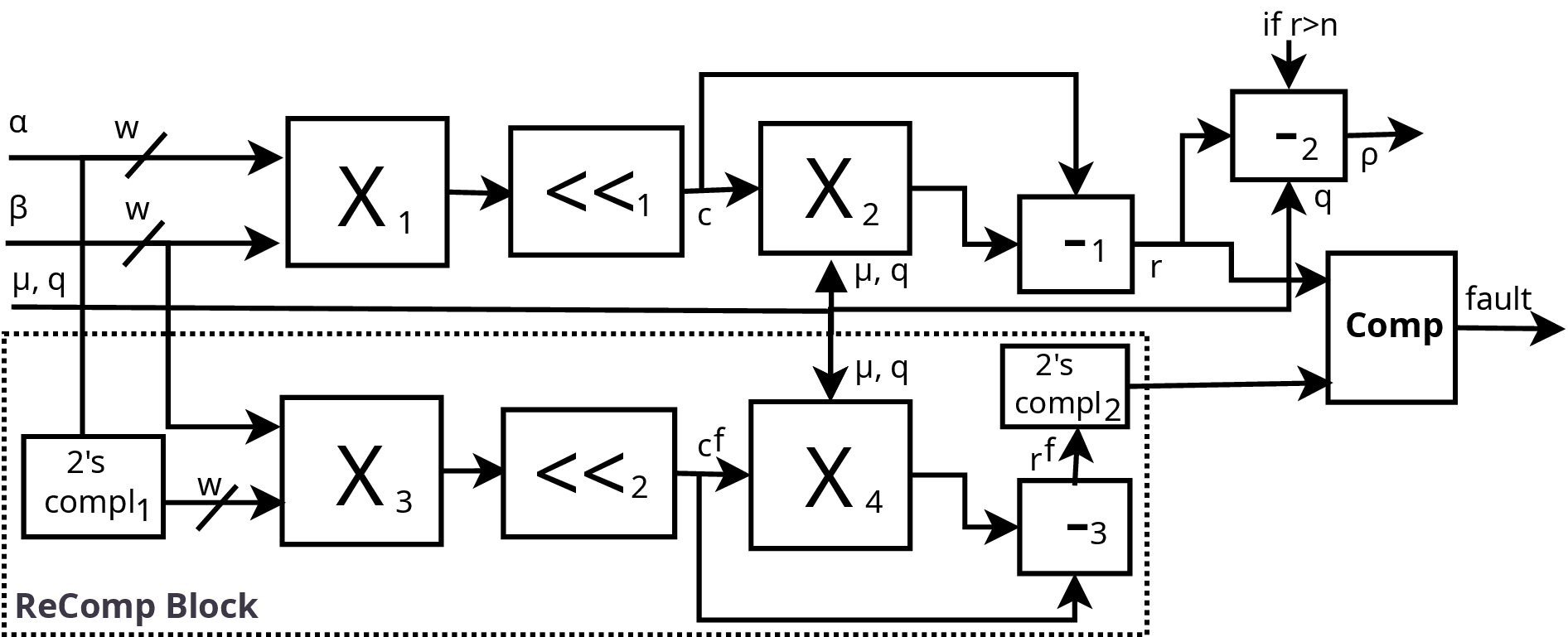}
\vspace{-5pt}
\caption{Hardware Architecture of Barret Reduction with RENO as ReComp Unit}
\vspace{-5pt}
\label{fig:reno}
\end{figure}
\subsubsection{RESO}
As shown in Fig. \ref{fig:reso}, the hardware of the $RESO$ module consists of two multipliers ($X_3$ and $X_4$), two left shifter ($<<_3$ and $<<_4$), a right shifter ($>>$) and a subtractor ($-_3$).
\begin{figure}[!htb]
\centering
\vspace{-5pt}
\includegraphics[width=0.5\textwidth]{./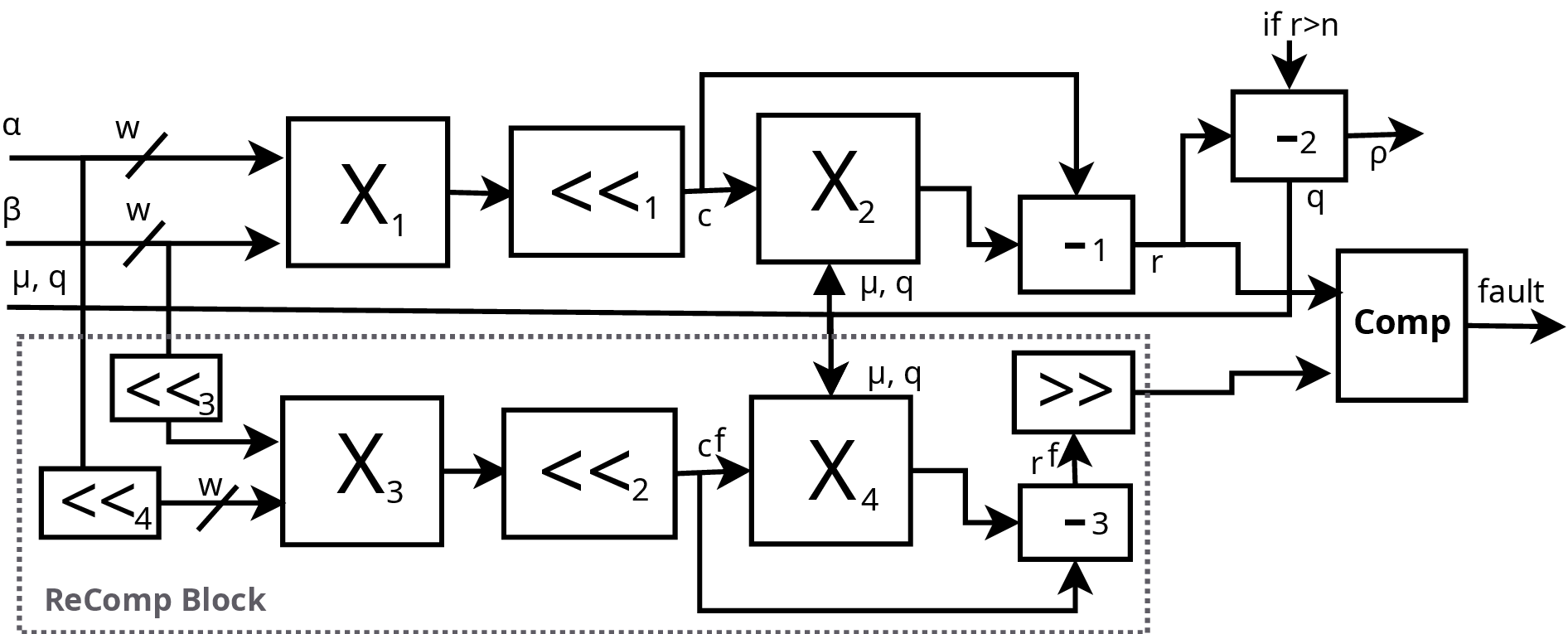}
\vspace{-5pt}
\caption{Hardware Architecture of Barrett Reduction with RESO as ReComp Unit}
\vspace{-5pt}
\label{fig:reso}
\end{figure}
As stated in line 1 of Algorithm \ref{algo:reso}, the left shifters $<<_3$ and $<<_4$ shift $\alpha$ and $\beta $ to the left and pad a $0$ at the least significant position of $\alpha$ and $\beta$, respectively. This one bit left shifted $\alpha$ and $\beta$ is presented as $shift_l(\alpha)$ and $shift_l(\beta) $ respectively. Therefore, the $X_3$ multiplier computes the product of $shift_l(\alpha)$  and $shift_l(\beta)$ in $c^f$ (line 1 of Algorithm \ref{algo:reso}). Consequently, left shifter block $<<_2$ left shifts $c^f$ by $(i+j)w$ positions and pads $(i+j)w$ zero bits at the least significant bit position (line 2 of Algorithm \ref{algo:reso}). 
\begin{algorithm}[!htb]
\caption{Recomputation with Shift Operands (RESO)}
\label{algo:reso}
\textbf{Input} Two positive integers $\alpha$ and $\beta$ \\
\textbf{Output} \textbf{$r^f$}= $\alpha \times \beta~mod~q$
\begin{algorithmic}[1]
\State Compute the product: $c^f$= $shift_l(\alpha) \times shift_l(\beta) $   
\State *Pad Zeros $c^f=c^f$ || $(i+j) \times w \{0\}$
\State Compute $r^f=c^f - (c^f \times \mu_{[2k-1+2... k+2]})\times q $
\State Compute remainder $r^f=shift_r(r^f)$
\State \textbf{return} $r^f$
\end{algorithmic}
\end{algorithm}
In this RESO Algorithm, the input words $\alpha$ and $\beta$ of the $X_3$ multiplier are padded with an extra zero at least significant bit position, therefore, the computation of $c^f$ at line 8 of Algorithm \ref{algo:mbrfd} is changed. The conventional Barrett Reduction method takes bits from position $k$ to $2k-1$ from the left side of the product of $c_f$ and $\mu$. However, the RESO Barrett Reduction takes bits from position $k+2$ to $2k-1+2$ to compensate for the impact of padded zeros in the input words $\alpha$ and $\beta$ (line 2 of Algorithm \ref{algo:reso}). The multiplier $X_4$ follows the same two-step multiplication process on $c^f$, $\mu$ and $q$ as described in Sec. \ref{sec:reswo} except the bit position of the product of $c^f$ and $\mu$. The modified line 8 of Algorithm \ref{algo:mbrfd} in RESO is shown in line 3 of Algorithm \ref{algo:reso}.

The right shifter block $>>$ then shifts the final $r^f$ by two bits, as described in line 4 of Algorithm \ref{algo:reso} and sends it to the $comp$ block for fault detection.

\subsection{Hardware Architecture of CT-BU \& MBRFD}
Our CT-BU component is designed with three pipeline stages: 
\begin{itemize}
    \item The $1^{st}$ pipeline stage buffer  $r=\omega[m+i]$ and $U=\alpha[j+k]$ (line 7 and 9 of Algorithm \ref{algo:CTNTT}).
    \item The $2^{nd}$ pipeline stage compute $V=$MBRFD$(\alpha[j+t], r, q)$ (line 10 of Algorithm \ref{algo:CTNTT}).
    \item The $3^{rd}$ pipeline stage compute $\overline{\alpha}[j]=U+V$ and  $\overline{\alpha}[j+t]=U-V$ (line 11 and 12 of Algorithm \ref{algo:CTNTT}).
\end{itemize} 
where $\alpha$ is the input polynomial and $\overline{\alpha}$ is pointwise representation of of $\alpha$.
The details of CT-BU is stated in \cite{ahmet}.
\begin{algorithm}[!htb]
\caption{Cooley-Tukey Iterative NTT algorithm\cite{ctntt}}
\label{algo:CTNTT}
\textbf{Input} A vector $\boldsymbol{\alpha} =(\alpha_{n-1},...\alpha_{1}, \alpha_{0})$, where each $\alpha_i \in [0, q-1]$
      of degree $n$ (a power of 2) and modulus $q\equiv 1~mod~2n$ \\
\textbf{Input}  Precomputed table of 2n-th roots of unity $\omega$, in bit reversed order\\
\textbf{Output} $\boldsymbol{\bar {\alpha} }\leftarrow NTT(\boldsymbol{\alpha} )$  $\in \mathbb{Z}_q [x]/(x^n +1)$
\begin{algorithmic}[1]
\Function{NTT}{$\boldsymbol{\alpha} $}
    \State $t \gets n/2$
    \State $m \gets 1$
    \While{$m < n$}
        \State $k \gets 0$
        \For{$i \gets 0$ to $m-1$}
            \State $r \gets \omega[m + i]$
            \For{$j \gets k$ to $k + t - 1$}
                \State $U \gets \boldsymbol{\alpha}[j]$
                \State $V \gets \textbf{MBRFD}(\boldsymbol{\alpha}[j+t], r,q)$ 
                 
                \State $\boldsymbol{\overline{\alpha}}[j] \gets (U + V) \mod q$
                \State $\boldsymbol{\overline{\alpha}}[j + t] \gets (U - V) \mod q$
            \EndFor
            \State $k \gets k + 2t$
        \EndFor
        \State $t \gets t/2$
        \State $m \gets 2m$
    \EndWhile
    \State \Return 
\EndFunction
\end{algorithmic}
\end{algorithm}
The polynomial coefficients stored in the Polynomial Coefficient memory are accessed by various computation units, including the polynomial multiplier, NTT and polynomial adder. As shown in Fig. \ref{fig:ct_bf_arch}, the $din$, $addr$ and $rd\_wr\_en$ of the Polynomial Coefficient memory can be accessed by different computation units through the $mux$es, whose selecting inputs are controlled by $Control~Unit$. The $demux$es are used to read  $\alpha$ from polynomial coefficient memory through $dout$ by different computation units of Kyber. 
The $\alpha$ stored polynomial coefficient memory is sent to the $CT-BU$ for generating the $U$ and $V$. The $MBRFD$ block shown in Fig. \ref{fig:ct_bf_arch} computes $V$ following algorithm \ref{algo:mbrfd}. The $MBRFD$ block performs the multiplication of $\alpha$ and $\omega$ in a $w$ word-wise manner.
As shown in Fig. \ref{fig:ct_bf_arch}, the $Barrett$ $Reduction$ block takes $w$ bits named $\alpha w_i$ at a time from the $\alpha$. Similarly, the $ReComp$ block shown in Fig. \ref{fig:ct_bf_arch}, processes $w$ bits named $\alpha w_i^f$ at a time from the $\alpha$. 
The $Barrett$ $Reduction$ block compute $r$ and the $ReComp$ unit compute $r^f$. 
Then, the $comp$ block compares $r$ and $r^f$. If both $r$ and $r^f$ match, $Fault=0$ and $V$ is accepted to calculate $\overline{\alpha}[j]$ and $\overline{\alpha}[j+t]$. Otherwise, if a fault is detected ($Fault=1$), the $\overline{\alpha}[j]$ and $\overline{\alpha}[j+t]$ are computed by the $adder$ and $sub$ blocks, respectively, as shown in Fig. \ref{fig:ct_bf_arch}.

\begin{figure*}[!htbp]
\centering
\vspace{-5pt}
\includegraphics[width=0.7\textwidth]{./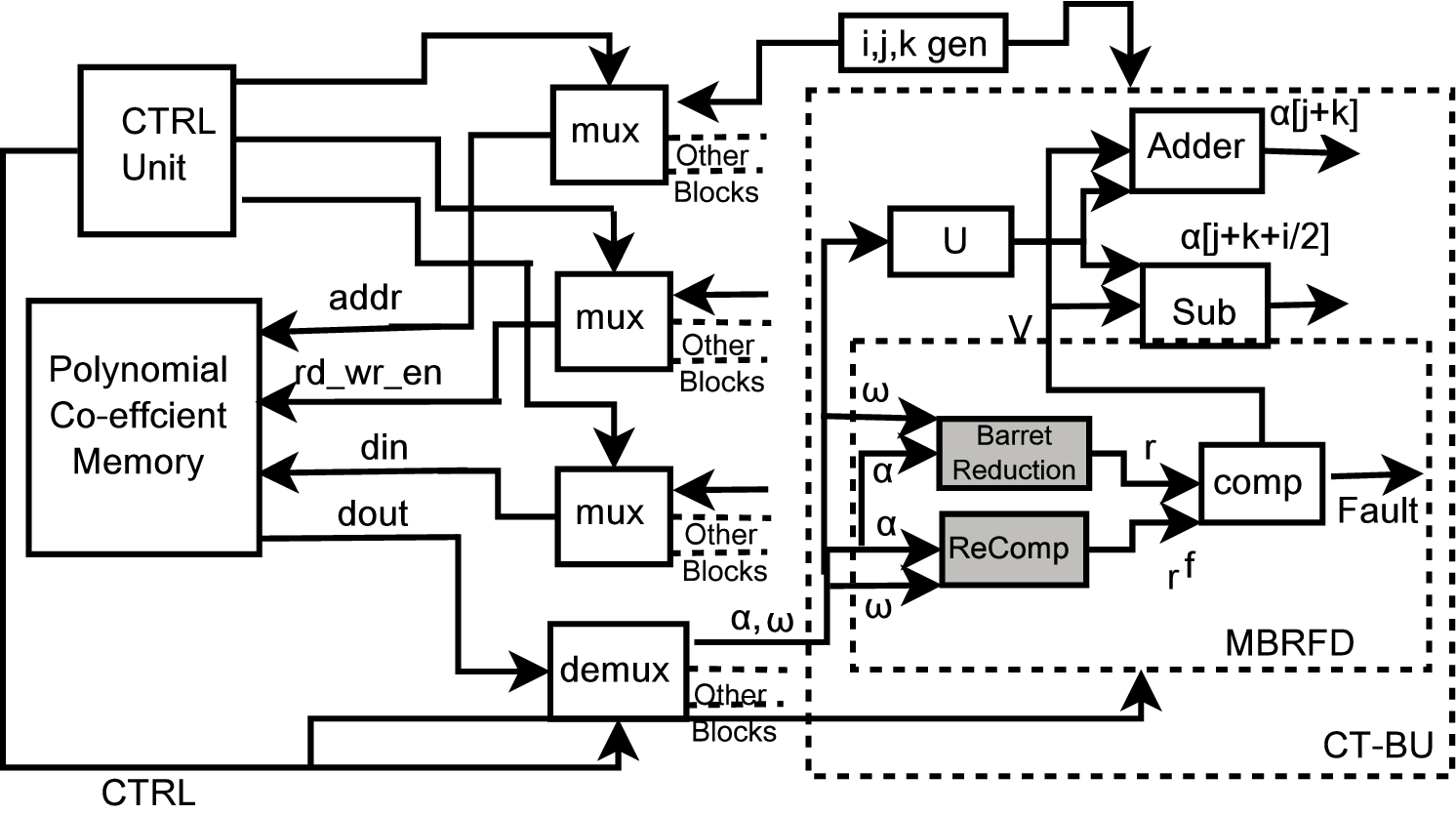}
\vspace{-5pt}
\caption{CT-BU Architecture With Fault Detection}
\vspace{-10pt}
\label{fig:ct_bf_arch}
\end{figure*}
\section{Results \& Discussions}
\label{sec:res}
This section discusses the overheads and error coverage of RESWO, RENO, and RESO compared to other existing fault detection solutions.
\begin{table*}[htbp!]
			\centering
			\begin{tabular}{|>{\centering\arraybackslash}p{4cm}|>{\centering\arraybackslash}p{1.5cm}|>{\centering\arraybackslash}p{1.5cm}|>{\centering\arraybackslash}p{1.5cm}|>{\centering\arraybackslash}p{1.5cm}|>{\centering\arraybackslash}p{1.5cm}|}
				\hline
				\textbf{Architecture} & \textbf{n, q} & \textbf{Slices} & \textbf{LUTs} & \textbf{FFs} & \textbf{Power (mW)}  \\ \hline\hline
				
				\textbf{Kyber CT-BU (Baseline)} & 256,  &573 & 972 & 239 & 131  \\ \cline{1-1} \cline{3-6}				
				\textbf{Kyber Barrett} & 3329  &76 & 254 & 89 & 99  \\ \cline{1-1} \cline{3-6}
				\textbf{Kyber Barrett with RESWO} &  & 128 & 444 & 127 & 101  \\
				\hline \hline
				
				\textbf{CRYSTALS-Dilithium CT-BU (Baseline)}   & 256,    &401 &689  &309 & 116  \\ \cline{1-1} \cline{3-6}
				\textbf{CRYSTALS-Dilithium Barrett}   & 8380417   &52 &127  &115 & 98  \\ \cline{1-1} \cline{3-6}
				\textbf{CRYSTALS-Dilithium Barrett with RESWO} & &77&175 &150&99   \\
				\hline\hline	
				\textbf{Falcon CT-BU (Baseline)}    & 512,  &314 &408&209&111   \\ \cline{1-1} \cline{3-6}
				\textbf{Falcon Barrett}    & 12289  &38 &88&73&96   \\ \cline{1-1} \cline{3-6}
				\textbf{Falcon Barrett with RESWO} &  &50&132&108&97  \\ \hline\hline	
				
				\textbf{NTRU CT-BU (Baseline) \cite{vadim}}      & 2048, & 312 & 411 & 207 & 111 \\ \cline{1-1} \cline{3-6}
				\textbf{NTRU Barrett}      & 12289 & 33 & 88 & 73 & 100 \\ \cline{1-1} \cline{3-6}
				\textbf{NTRU Barrett with RESWO} &  & 47 & 132 & 108 & 101 \\ \hline\hline
				
				\multicolumn{6}{|c|}{\textbf{Artix-7 (xc7a100tcsg324-3), w=4, clock=100MHz}} \\ \hline
			\end{tabular}
			\vspace{2pt}
			\caption{Overhead of RESWO in Different PQC Algorithms}
			\label{tab:compalgo}
\end{table*}
\subsection{Overheads}
The designs are becnhmarked on an $Artix$-$7$ $(xc7a100tcsg324$-$3)$ FPGA with $Vivado$ $22.02$ and the VHDL.
The overheads of the proposed fault detection algorithms such as RESWO, RENO and RESO are calculated from three perspectives:
\subsubsection{Overheads of RESWO in PQC Algorithms}
This paper implements Barrett Reduction with proposed RESWO fault detection model for the Kyber, CRYSTALS-Dilithium, Falcon and NTRU standards.
 \begin{table*}[!htbp]
 		\centering
	\begin{tabular}{|p{0.7cm}|>{\centering\arraybackslash}p{5.2cm}|>{\centering\arraybackslash}p{3.2cm}|>{\centering\arraybackslash}p{1.3cm}|>{\centering\arraybackslash}p{1.2cm}|>{\centering\arraybackslash}p{1.2cm}|>{\centering\arraybackslash}p{1.3cm}|}
	\hline
	\textbf{Work}& \textbf{Type of Fault } & \textbf{Platform}& \multicolumn{3}{c|}{\textbf{Overhead (\%)}}& \textbf{(\%) Error} \\ \cline{4-6}
	
	& \textbf{Detection \& Target HW} &  &\textbf{Area} & \textbf{Delay} & \textbf{Energy} &  \textbf{Coverage} \\ \hline
    	\cite{canto2}  &RESO (Saber/NTRU/ FrodoKEM) &Kintex
Ultrascale+ FPGA &36.6/39.6/ 28.4  & 28.3/16.7/ 32.7   & 1.2/3.2/ $\sim$0 & >99.9 \\ \hline
    	\cite{canto2}  &RECO \& RENO (Saber/NTRU/ FrodoKEM) &Kintex
Ultrascale+ FPGA &NA/NA/NA  & NA/NA/NA   & NA/NA/NA $\sim$0 & >99.9 \\ \hline
    \cite{kermani}  &RESCAB (Galois Counter Mode) & 65nm ASIC&4.9/6.7   & NA   & NA  & 100 \\ \hline
	\cite{saeed}  &REMO ($x^y~mod~n$) &Artix UltraScale+ FPGA& 0.8   & 0.27   & 0.65  & 97.1-100 \\ \hline
   \cite{dominguez}  &Point Validation (ECSM) & Spartan 3 1000 FPGA &15.17   & 4.8   & NA  & $\sim$99.99 \\ \hline    
    	\cite{sarker}  &RENO v1/v2/v3 (NTT) &Spartan7 FPGA& 20.2/15.3/ 21.5 *  & 8.46/15.88/ 13.71   & 15.6/7.6/ 11.2  & 99.51/99.67 /99.41 \\ \hline
        \cite{sarker}  &RENO v1/v2/v3 (NTT)& Zynq FPGA& 24/7.5/ 17*  & 9.32/19.66/ 21.78   & 20.47/13.27 /17.26  & 99.51/99.67 /99.41 \\ \hline
        \cite{ahmadECSM}  & Window Method Scalar Multiplication (ECSM) &ZYNQ Ultrascale+ FPGA&  1.8 &  0  & 0.1&39-99.9  \\ \hline

    \cite{ahmadiNAF}  & Coherency Check(Single $\tau$NAF) &ARM CORTEX-M4 Processor&  - &  8.5  & NA&83-97  \\ \hline
	\cite{cintas}  &1/2/3-bit parity (McEliece) &Kintex-7 FPGA& 9.8/11.3/9.6   & 1.4/0.8/1   & 2.7/2.7/2.7  & 100 \\ \hline        
	\cite{canto}  &CRC5 (sub, add of McEliece) & Kintex-7 FPGA & 18.33   & 11.25   & $\sim$0  & >99.9 \\ \hline
	\cite{kamal}  &Spatial duplication (NTRU)& Virtex-E FPGA& 6.22   & NA   & NA  & 100 \\ \hline

    \textbf{Our} & \textbf{RESWO (CT-BU of Kyber)} &Artix-7 FPGA& \textbf{9.07}  & \textbf{2.02} &\textbf{1.52}& \textbf{$\sim 99.97$} \\ \hline
    \textbf{Our} & \textbf{RENO (CT-BU of Kyber)} &Artix-7 FPGA& \textbf{9.77}   & \textbf{2.98} &\textbf{1.52}& \textbf{$\sim 99.97$} \\ \hline
    \textbf{Our} & \textbf{RESO (CT-BU of Kyber)}& Artix-7 FPGA & \textbf{8.9}     & \textbf{2.34} &\textbf{1.52}& \textbf{$\sim 99.97$} \\ \hline
\end{tabular}
\vspace{2pt}
 		\caption{Overhead Comparison with literature}
 	\label{tab:lit}
 \end{table*} 
Table \ref{tab:compalgo} reports the slice, delay and power consumption of both unprotected (baseline) and protected versions with RESWO of the Kyber, CRYSTALS-Dilithium, Falcon, and NTRU algorithms.
Fig. \ref{fig:overhead} compares the overhead of Barrett Reduction with RESWO in terms of slice or area, delay and power consumption for the Kyber, CRYSTALS-Dilithium, Falcon, and NTRU PQC algorithms. It is observed that the resource and power consumption of different PQC algorithms vary depending on the values of $q$ and $n$. It is to be noted that the $Barrett$ $Reduction$ is already a subcomponent of $CT$-$BU$. For instance, in the first row of Table \ref{tab:compalgo}, $Kyber$ $Barrett$ is a subcomponent of $Kyber$ $CT$-$BU$ (baseline). The implementation cost of $Kyber$ $Barrett$ $with$ $RESWO$ includes both $Kyber$ $Barrett$ and the fault detection block $RESWO$. As shown in Fig. \ref{fig:overhead}, the overheads $OH$(\%) in the implementation cost of PQC algorithms (area/slice, delay, and power) for different CT-BU designs are measured using equ. \ref{equ:overhead}.
\begin{equation}
\label{equ:overhead}
\small{OH=\frac{PQC_i~Barrett~with~RESWO-PQC_i~Barrett}{PQC_i~CT-BU} \times 100}
\end{equation}
Here the $PQC_i$ has 4 options:  Kyber, CRYSTALS-Dilithium, Falcon, and NTRU.	The area. delay and power overheads are calculated using equ. \ref{equ:overhead}.
\begin{figure}[!htb]
\centering
\includegraphics[width=0.5\textwidth]{./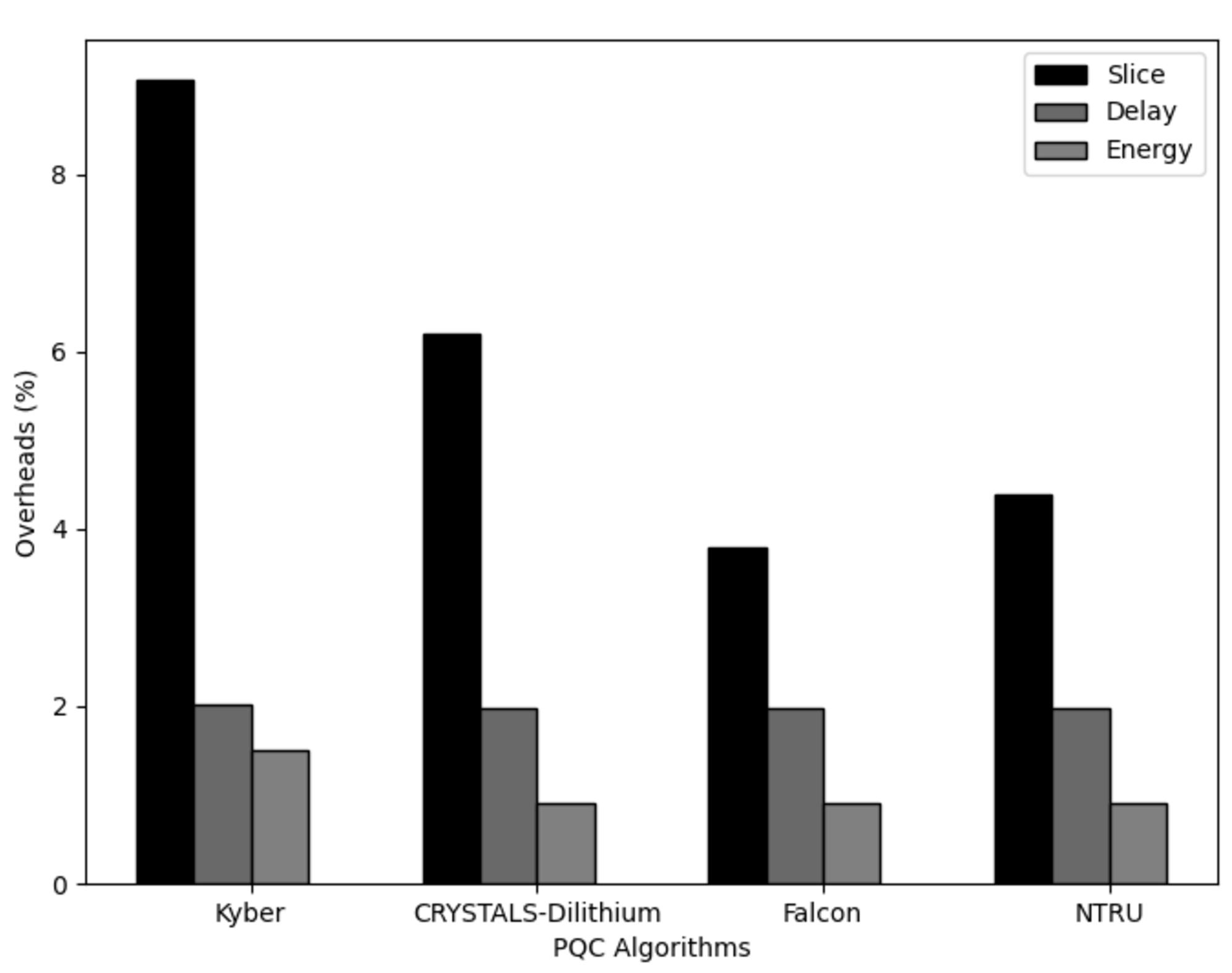}
\caption{Overheads of Proposed Barrett Reduction with RESWO in Different PQC Algorithms}
\label{fig:overhead}
\end{figure}
\subsubsection{Overheads of RESWO, RENO \& RESO in CT-BU of Kyber}
This paper implements the RESWO, RENO and RESO in the Barrett Reduction of $CT-BU$ placed inside Kyber and reports the slices, LUT. DSP, power and delay in Table \ref{tab:comp}.
It shows that the resource and power consumption of the proposed RESWO fault detection model are similar to RENO and RESO. However, RESWO shows lower delay compared to RENO and RESO.  The delay of RESWO is 9.51 ns, outperforming RENO (9.67 ns) and RESO (9.61 ns). The proposed RESWO, along with RENO and RESO, consumes 1.52\% more energy compared to the unprotected CT-BU.
 \begin{table}[!htbp]
 	\centering	
	\begin{tabular}{|>{\centering\arraybackslash}p{1.4cm}|>{\centering\arraybackslash}p{0.8cm}|>{\centering\arraybackslash}p{1.2cm}|>{\centering\arraybackslash}p{0.90cm}|>{\centering\arraybackslash}p{0.7cm}|>{\centering\arraybackslash}p{0.7cm}|}
	\hline
	\textbf{Block Names}& \textbf{Slices} & \textbf{LUTs} /\textbf{FFs} & \textbf{DSPs}/ \textbf{BRAMs}&\textbf{Power (mw)}& \textbf{Delay (ns)} \\ \hline
	CT BU (baseline)   &573& 972/ 239   & 2/1&131  & 9.39  \\ \hline
	Barrett    &76& 254/ 89   & 0/0 &101   & 7.177   \\ \hline
	RESWO    &52& 190/ 38   & 0/0& 2  & 9.51  \\ \hline
    RENO    &56& 197/ 48   & 0/0& 2  & 9.67  \\ \hline
    RESO    &51& 182/ 42   & 0/0& 2  & 9.61  \\ \hline
	\multicolumn{6}{|c|}{\textbf{Artix-7 (xc7a100tcsg324-3), n=256, q=3329, l=12, w=4,}}  \\ 
	\multicolumn{6}{|c|}{\textbf{clock=100MHz}}  \\ \hline
\end{tabular}
\vspace{2pt}
 		\caption{Overhead of RESWO, RENO and RESO}
 	\label{tab:comp}
 	\end{table}
\subsubsection{Overheads of RESWO, RENO \& RESO Compared to Other Fault Detection Techniques in Literature}
Table \ref{tab:lit} reports implementation cost and error detection efficiency of our RESWO, RENO and RESO schemes with existing solutions. To the best of our knowledge, there is no existing Kyber implementation with a fault detection method available in the literature. Therefore, Table \ref{tab:lit} presents a comparison of our Kyber-based fault detection methods with other fault detection methods used in different cryptographic algorithms. The proposed RESWO has a slightly higher area overhead compared to RESO and lower overhead compared to RENO. However, the proposed RESWO achieves the best delay compared to RESO and RENO. Unlike conventional techniques such as 1/2/3-bit parity or spatial duplication, which introduce substantial area and power overhead, RESWO achieves comparable fault coverage with lower resource consumption. The area overhead in Table \ref{tab:lit} is calculated by number slices. Area overhead marked by * in Table \ref{tab:lit} are approximated by equ. \ref{equ:sec}.
\begin{align}
\label{equ:sec}
SEC=&0.25 \times LUTs + 0.125 \times FFs + 100 \times DSPs\\
&+ 200 \times BRAMs \nonumber
\end{align}
Here $SEC$ is referred as Slice Effective Cost (SEC).
This approximation method is taken from \cite{liu}. Table \ref{tab:lit} shows that the overheads of the proposed RESWO, RENO, and RESO fault detection models in Barrett Reduction are reasonable and highly competitive with existing fault detection solutions for both PQC and classical cryptography. The higher error detection efficiency of our methods is acceptable considering this minimal implementation cost. $NA$ in Table \ref{tab:lit} refers to data that is $Not$ $Available$.
\subsection{Error Coverage}
To measure the error detection efficiency of the proposed RESWO, RENO, and RESO, these algorithms are implemented in Python and executed on an Ubuntu $24.04$ system with an $i5$ processor and 8 GB of RAM, utilizing 1.5 million samples. This process simulates fault injection in two ways: (i) Random fault injection, where fault bits are injected at random positions of the operands.
(ii) Burst fault injection, where fault bits are injected in consecutive positions of the operands.
This paper studies random and burst fault injection processes in three modes : i) Fault in $\alpha$ where random and burst faults are injected only in $\alpha$, ii) Fault in $\beta$ where random and burst faults are injected only in $\beta$ and iii) Fault in $\alpha$ and $\beta$ where random and burst faults are injected in both $\alpha$ and $\beta$.
Table \ref{tab:fault_reswo}, Table \ref{tab:fault_reno}, and Table \ref{tab:fault_reso} show that the fault detection efficiency of RESWO, RENO, and RESO ranges from 99.95\% to 99.97\% across different fault injection scenarios and fault modes. From Table \ref{tab:fault_reswo}, it is observed that $w$ has minimal effect on fault coverage of RESWO method. As $w$ does not affect error detection efficiency for RESWO, RENO, and RESO, the authors report the $w$ vs. error detection efficiency analysis only for RESWO and intentionally omit it for the RENO and RESO methods. However, changes in $w$ alter the bit size of all logic elements used in Barrett Reduction, which are computed within a single clock cycle. Therefore, $w$ significantly impacts the implementation cost, including slice utilization, power consumption and delay.

  \begin{table}[htbp!]
	\centering
	\begin{tabular}{|>{\centering\arraybackslash}p{0.1cm}|>{\centering\arraybackslash}p{0.6cm}|>{\centering\arraybackslash}p{0.75cm}|>{\centering\arraybackslash}p{0.65cm}|>{\centering\arraybackslash}p{0.75cm}|>{\centering\arraybackslash}p{0.65cm}|>{\centering\arraybackslash}p{0.75cm}|>{\centering\arraybackslash}p{0.65cm}|}
		\hline
		\textbf{w}&\textbf{\#} & \multicolumn{6}{c|}{\textbf{Fault Detection Efficiency(\%)}}\\ \cline{3-8}
		& \textbf{fault}  & \multicolumn{2}{c|}{ \textbf{fault in $\alpha$}} & \multicolumn{2}{c|}{\textbf{fault  in $\beta$}} & \multicolumn{2}{c|}{\textbf{fault  in $\alpha$ \& $\beta$}} \\ \cline{3-8}
		       & \textbf{bits($\eta$)}   & \textbf{random}&\textbf{burst} &\textbf{random}&\textbf{burst}   &\textbf{random}&\textbf{burst}  \\ \hline
            & 1    &  99.97  &    -      & 99.96 &   -        & 99.97  &    -  \\
			& 3    &  99.97  &  99.97    & 99.97 & 99.97      & 99.97  &  99.97   \\
			& 5    &  99.94  &  99.97    & 99.94 & 99.97      & 99.97  &  99.97   \\
		4	& 11   &  99.95  &  99.97    & 99.95 & 99.97      & 99.97  &  99.97   \\
			& 17   &  99.95  &  99.95    & 99.95 & 99.94      & 99.97  &  99.96   \\
			& 23   &  99.95  &  99.95    & 99.96 & 99.95      & 99.96  &  99.97 \\ 
			\hline
            & 1   & 99.97  &    -      & 99.97 &   -        & 99.97  &    - \\ 
			& 3   &   99.97  &  99.97    & 99.97 & 99.97      & 99.97  &  99.97   \\ 
			& 5   &  99.95  &  99.97    & 99.95 & 99.97      & 99.97  &  99.97 \\
		8	& 11   &   99.95  &  99.97    & 99.95 & 99.98      & 99.97  &  99.97  \\
			& 17   &   99.95  &  99.95    & 99.95 & 99.95      & 99.97  &  99.97  \\
			& 23   &   99.95  &  99.95    & 99.95 & 99.95      & 99.97  &  99.97   \\

		\hline

			& 1   & 99.97  &    -      & 99.97 &   -        & 99.97  &    - \\ 
			& 3   &   99.97  &  99.97    & 99.97 & 99.97      & 99.97  &  99.97   \\ 
			& 5   &  99.95  &  99.97    & 99.95 & 99.97      & 99.97  &  99.97 \\
		24	& 11   &   99.95  &  99.97    & 99.95 & 99.98      & 99.97  &  99.97  \\
			& 17   &   99.95  &  99.95    & 99.95 & 99.95      & 99.97  &  99.97  \\
			& 23   &   99.95  &  99.95    & 99.95 & 99.95      & 99.97  &  99.97   \\

		\hline
\multicolumn{8}{|c|}{\textbf{l=24, sample size=1.5 million}}\\\hline		
		
	\end{tabular}
	\vspace{2pt}
	\caption{Error Detecting Efficient for $n$ bit Random \& Burst Flipping using RESWO}
	\label{tab:fault_reswo}
	
\end{table}

 \begin{table}[htbp!]
	\centering
	\begin{tabular}{|>{\centering\arraybackslash}p{0.1cm}|>{\centering\arraybackslash}p{0.6cm}|>{\centering\arraybackslash}p{0.75cm}|>{\centering\arraybackslash}p{0.65cm}|>{\centering\arraybackslash}p{0.75cm}|>{\centering\arraybackslash}p{0.65cm}|>{\centering\arraybackslash}p{0.75cm}|>{\centering\arraybackslash}p{0.65cm}|}
		\hline
		\textbf{w}&\textbf{\#} & \multicolumn{6}{c|}{\textbf{Fault Detection Efficiency(\%)}}\\ \cline{3-8}
		& \textbf{fault}  & \multicolumn{2}{c|}{ \textbf{fault in $\alpha$}} & \multicolumn{2}{c|}{\textbf{fault  in $\beta$}} & \multicolumn{2}{c|}{\textbf{fault  in $\alpha$ \& $\beta$}} \\ \cline{3-8}
		       & \textbf{bits($\eta$)}   & \textbf{random}&\textbf{burst} &\textbf{random}&\textbf{burst}   &\textbf{random}&\textbf{burst}  \\ \hline
		    & 1    &  99.96  &    -      & 99.97 &   -        & 99.97  &    -  \\
			& 3    &  99.96  &  99.96    & 99.96 & 99.96      & 99.96  &  99.97   \\
			& 5    &  99.96  &  99.96    & 99.96 & 99.97      & 99.96  &  99.97   \\
		24	& 11   &  99.96  &  99.96    & 99.96 & 99.96      & 99.96  &  99.97   \\
			& 17   &  99.96  &  99.96    & 99.96 & 99.96      & 99.96  &  99.96   \\
			& 23   &  99.96  &  99.96    & 99.96 & 99.96      & 99.96  &  99.97 \\ 
			\hline
	
\multicolumn{8}{|c|}{\textbf{l=24, sample size=1.5 million}}\\\hline		
		
	\end{tabular}
	\vspace{1pt}
	\caption{Error Detecting Efficient for $n$ bit Random \& Burst Flipping using RENO}
	\label{tab:fault_reno}
	\vspace{-20pt}
\end{table}

 \begin{table}[htbp!]
	\centering
	\begin{tabular}{|>{\centering\arraybackslash}p{0.1cm}|>{\centering\arraybackslash}p{0.6cm}|>{\centering\arraybackslash}p{0.75cm}|>{\centering\arraybackslash}p{0.65cm}|>{\centering\arraybackslash}p{0.75cm}|>{\centering\arraybackslash}p{0.65cm}|>{\centering\arraybackslash}p{0.75cm}|>{\centering\arraybackslash}p{0.65cm}|}
		\hline
		\textbf{w}&\textbf{\#} & \multicolumn{6}{c|}{\textbf{Fault Detection Efficiency(\%)}}\\ \cline{3-8}
		& \textbf{fault}  & \multicolumn{2}{c|}{ \textbf{fault in $\alpha$}} & \multicolumn{2}{c|}{\textbf{fault  in $\beta$}} & \multicolumn{2}{c|}{\textbf{fault  in $\alpha$ \& $\beta$}} \\ \cline{3-8}
		       & \textbf{bits($\eta$)}   & \textbf{random}&\textbf{burst} &\textbf{random}&\textbf{burst}   &\textbf{random}&\textbf{burst}  \\ \hline
		    & 1    &  99.97  &    -      & 99.97 &   -        & 99.96  &    -  \\
			& 3    &  99.97  &  99.96    & 99.96 & 99.97      & 99.96  &  99.97   \\
			& 5    &  99.96  &  99.97    & 99.96 & 99.97      & 99.96  &  99.97   \\
		24	& 11   &  99.95  &  99.97    & 99.97 & 99.96      & 99.96  &  99.97   \\
			& 17   &  99.97  &  99.97    & 99.97 & 99.96      & 99.97  &  99.96   \\
			& 23   &  99.96  &  99.96    & 99.96 & 99.96      & 99.97  &  99.97 \\ 
			\hline
	
    \multicolumn{8}{|c|}{\textbf{l=24, sample size=1.5 million}}\\\hline		
		
	\end{tabular}
	\vspace{1pt}
	\caption{Error Detecting Efficient for $n$ bit Random \& Burst Flipping using RESO}
	\label{tab:fault_reso}
	\vspace{-20pt}
\end{table}

\section{Conclusion}
\label{sec:con}
This manuscript addresses the problem of fault detection in Barrett Reduction of CT-BU, which is the most critical and implementation-expensive hardware block in the latest PQC infrastructure. Natural faults or intentional faults induced during a side-channel attack on such fundamental hardware blocks may compromise the security of quantum attack-resistant PQC algorithms.
This manuscript presented a new recomputation based fault detection algorithm named RESWO for Barrett Reduction integrated within a CT-BU. We also compared the delay, resource utilization, power consumption and error coverage of RESWO with two existing recomputation-based fault detection schemes, RENO and RESO. The proposed RESWO has similar error detection efficiency, resource utilization, and power consumption, while it achieves lesser delay compared to RENO and RESO. To the best of our knowledge, this is the first time RENO and RESO are used in Barrett Reduction placed inside CT-BU. The proposed RESWO and other existing fault detection schemes used in our method are capable of addressing both permanent and transient faults. 
Although the fault detection schemes RESWO, RENO and RESO used in this paper is specifically designed for Barrett Reduction of CT-BU, it can also be adopted to any polynomial multiplication with modular reduction, where Barrett Reduction is used. The source code of this work is available on GitHub \footnote{https://github.com/rourabpaul1986/NTT/tree/master/barrett}. In the future, the authors will explore various fault detection algorithms for different hardware components of PQC algorithms.

\bibliographystyle{unsrt}  
\bibliography{IEEEexample}

\end{document}